\newtheorem{theorem}{Theorem}
\newtheorem{lemma}{Lemma}
\begin{document}
\title{On Optimal Online Algorithms for Energy Harvesting Systems with Continuous Energy and Data Arrivals} %
\author{\normalsize Milad Rezaee,
        Mahtab~Mirmohseni and~Mohammad Reza Aref\\
        \thanks{M. Rezaee, M. Mirmohseni and M. R. Aref are with the Information Systems and Security Laboratory,
        Department of Electrical Engineering, Sharif University of Technology,
        Tehran 11365/8639, Iran (e-mail: miladrezaee@ee.sharif.edu; mirmohseni@
        sharif.edu; aref@sharif.edu). }
      \vspace{0ex}
       }

\markboth{\vspace{0ex}}%
{}
\maketitle

\begin{abstract}
\boldmath
Energy harvesting (EH) has been developed to extend the lifetimes of energy-limited communication systems. In this letter, we consider a single-user EH communication system, in which both of the arrival data and the harvested energy curves are modeled as \emph{general} functions. Unlike most of the works in the field, we investigate the online algorithms which only acquire the causal information of the arrival data and the harvested energy processes. We study how well the optimal online algorithm works compared with the optimal offline algorithm, and thus our goal is to find the lower and upper bounds for the ratio of the completion time in the optimal online algorithm to the optimal offline algorithm. We propose two online algorithms which achieves the upper bound of $2$ on this ratio. Also, we show that this ratio is $2$ for the optimal online algorithm.
\vspace{0ex}
\end{abstract}

\IEEEpeerreviewmaketitle
\section{Introduction}
\fontsize{9.9}{11} \selectfont
%
%
%
%
Providing the required energy from the natural renewable sources, Energy harvesting (EH) systems not only improve lifetime of the wireless systems, but also have been developed to make the green communications possible. Recently, technology progress has donated towards realizing effective practical design of EH devices, yielding the  sufficient power required for communications. EH systems differ from conventional communication systems in that a constant data transmission rate cannot be guaranteed because of sporadic nature of received energy as well as causal information about it.

One of the challenges in EH systems is finding a policy to optimize delay or throughput as the evaluation metric, which results in the throughput maximization or the completion time minimization. Both problems have been considered with the assumption of either noncausal or causal knowledge of the harvested energy process, where their corresponding algorithms are referred to as offline or online, respectively. The offline algorithms need the information about the future of the (harvested energy and/or arrival data) process and potentially perform better than the causal algorithms, which only require the past and current information.
The throughput maximization problem has been studied with both noncausal and causal knowledge of the EH process.
In \cite{tutuncuoglu2012optimum}, the optimal offline algorithm for the throughput maximization problem is investigated when the harvested energy is modeled as a discrete curve.
Considering a discrete model for the EH process in \cite{vaze2013dynamic}, an online algorithm to maximize the throughput of a wireless channel with arbitrary fading coefficients is designed. Recently, a few works have considered a continuous model for the harvested energy in the throughput maximization problem \cite{varan2014energy,milad,rezaee2016optimal}.

Our focus in this letter is on the delay metric, i.e., the \emph{time minimization problem}, which is investigated in \cite{yang2012optimal,vaze2013competitive,zheng2016online}.
To study the performance of an online algorithm one may consider the ratio of the completion time in the online algorithm to the optimal offline algorithm (called the online \emph{time efficiency ratio}). An online algorithm is called $\rho$-competitive if its online time efficiency ratio is less than or equal to $\rho$.
In \cite{yang2012optimal}, the optimal offline algorithm is proposed to minimize the completion time for the discrete model with an EH transmitter (Tx). In \cite{vaze2013competitive}, a lower bound and an upper bound on the optimal online time efficiency ratio are found in order to transfer a given data in a single-user and a multiple-access channels. It is proved in \cite{zheng2016online} that the optimal online algorithm is 2-competitive for the discrete model with an EH Tx. The only work that considers the data arrival process in the completion time minimization problem proposes the optimal \emph{offline} algorithm in the discrete setup \cite{yang2012optimal}.

Only requiring the causal information, the online algorithms are more practical than the offline algorithms while it is difficult to obtain their analytical performance limits. In fact, the challenge is to derive bounds on the online time efficiency ratio of a proposed online algorithm. In this paper, we consider the time minimization problem while the \emph{data arrival process} is taken into account. Our purpose is to investigate the optimal online time efficiency ratio in a single-user EH system in a \emph{continuous} model (noting that the results are applicable to the discrete model, too). In consistence with the prior models in \cite{vaze2013competitive,zheng2016online,vaze2013dynamic}, we assume that we have no access to the distributions of harvested energy and arrival data. We remark that our proposed online algorithms can be used where we know the distributions of harvested energy and arrival data as well. But considering distributions in online algorithms severely increases computational complexity. Our contributions has been listed as:
i) We propose two online algorithms and prove that they are 2-competitive. ii) We show that the optimal online time efficiency ratio is also 2. iii) We compare the performance of the proposed online algorithms in terms of transmitted data and/or completion time.

To the best of our knowledge, the mentioned problem (completion time minimization using online algorithms with an arrival data process) has not been considered before even in a discrete setup while we consider the continuous model for both harvested energy and data arrival.
The continuity assumption together with the arrival data existence makes our proof techniques to be different from the existing works.
Although most of existing research works assume a discrete model for the harvested energy to make the analysis tractable, a continuous model is more accurate for the EH sources in many applications \cite{varan2014energy,ottman2002adaptive}. Besides, as we show in Section~\ref{sec:num}, the discretizing of harvested energy curve reduces the efficiency of system. In addition, the motivation of continuous model for data arrival comes from the rateless codes, network calculus and using relays in communications \cite{milad}.
\section{System Model and Problem Description}
Assume that we have a single-user communication system, where the Tx is a node which harvests energy from a renewable source in a continuous fashion. Also, assume that the receiver (Rx) has enough energy to provide sufficient power for decoding at any rate that can be achieved by the Tx. Also, we have the following assumptions. The transmitted data curve ($B(t)$) and the transmitted energy curve ($E(t)$) denote the amount of transmitted data and utilized energy at the Tx in the interval $[0,t]$ for $t \in [0,\infty)$, respectively. $B(t)$ and $E(t)$ are continuous and they are differentiable, except probably in a finite number of points. The transmitted power curve, $p(t)$, denotes the amount of instantaneous used power at the Tx for $t \in [0,\infty)$, which is a piecewise continuous function. The arrival data curve ($B_{s}(t)$) and the harvested energy curve ($E_{s}(t)$) denote the amount of arrived data and harvested energy at the Tx in the interval $[0,t]$, respectively. $B_{s}(t)$ and $E_{s}(t)$ are bounded and they are differentiable, except probably in a finite number of points (in these points, $B_s(t)$ and $E_s(t)$ can have discontinuity or unequal right and left derivatives). To include the discrete case in our model $B_{s}(t)$ and $E_{s}(t)$ are assumed to be piecewise continuous. Moreover, the derivatives of $B_{s}(t)$ and $E_{s}(t)$ are bounded (except probably in a finite number of points) and piecewise continuous.
In addition, $r(p)$ is a continuous channel capacity function in which $r(0)=0$, ii) $r(p)$ is a non-negative strictly concave function in $p$, iii) $r(p)$ is differentiable, iv) $r(p)$ increases monotonically in $p$, and v) $\lim_{p \to\infty }r(p)=\infty$.

Before proceeding to the online algorithm, we define the problem at hand, i.e., the minimum time in which the Tx can transmit $B_{0}$ bits to the Rx. Note that we assume that until time $T_{B_{0}}\leq T_{off}$ in the Tx $B_{s}(T_{B_{0}})=B_{0}$, and we have no new arrival data after $T_{B_{0}}$. Our problem is defined in the following optimization problem:

\begin{small}
\begin{align}
    T_{\textrm{off}}=&\min_{p(t)}~~T\label{2}\\
    s.t.&~ B_{0}=\int_{0}^{T}r(p(t))dt\label{3}\\
    &\int_{0}^{t}p({t}')d{t}'\leq E_{s}(t),~ 0\leq t\leq T\label{4}\\
    &\int_{0}^{t}r(p({t}'))d{t}'\leq  B_{s}(t),~0\leq t\leq T, \label{5}
\end{align}\end{small}where $T_{\textrm{off}}$ is the minimum completion time in the offline algorithm. Equation \eqref{3} shows that until time $T$ should be sent $B_{0}$ bits; the inequality \eqref{4} shows that energy cannot be used while it still has not arrived (energy causality); the inequality \eqref{5} shows data cannot be sent while it still has not arrived (data causality).
We remark that the only works that considered online algorithms in a similar setup are \cite{vaze2013competitive,vaze2013dynamic,zheng2016online} which assumed the stored data in the beginning of transmission and their model were discrete (compared to our \emph{arrival data process} \eqref{5} and \emph{continuous} model, respectively). Thus, we exploit different approaches to prove our results. Also, we propose two algorithms and compare the performance of them in Theorem \ref{performace}. Moreover, our results hold for any channel model (with rate shown by a concave function $r(p)$). However, the results of \cite{vaze2013competitive,vaze2013dynamic} rely on the $\log$-type rate functions (Gaussian channel).

\section{Main Results}\label{sec:online}
\textbf{Notations:}
 Let $\varphi$ be an optimization problem which depends on a set of curves $\gamma$ and let $A$ be an online algorithm that works without knowing the future of set of curves $\gamma$. Now define $C_{A}^{\varphi}$ as the cost function of online algorithm $A$ for optimization problem $\varphi$, and $C_{O}^{\varphi}$ as cost function of optimal \emph{offline} algorithm $O$ for optimization problem $\varphi$. We say $A$ is a $\rho_{A}$-competitive online algorithm, if $\max_{\gamma } \frac{C_{A}^{\varphi}}{C_{O}^{\varphi}}\leq  \rho_{A}$.
  In this paper, we assume that $C_{A}^{\varphi}$ is the completion time of the transmission in $A$. Also, the subscript (on) refers to the online algorithm and (off) to the optimal offline algorithm.
Also, 
let $E_{rem}(t)$ be the amount of energy that is available in energy buffer at instant $t$. Also, let $B_{rem}(t)$ be the amount of the remaining data that should be transmitted at instant $t$, in order to transmit total $B_{0}$ bits.

In the following, we propose two algorithms for the optimization problem \eqref{2}-\eqref{5} where we assume that the Tx has only causal information about two curves $B_{s}(t)$ and $E_{s}(t)$. In our model, neither the distribution of the arrival data process nor the distribution of the harvested energy process are known. In both algorithms, the transmission process does not necessarily start immediately after reception of the first data and/or energy. In the second algorithm, the Tx never becomes silent after starting transmission, however, in the first algorithm, the transmission-silent cycles are repeated depending on the arrival data.\\
\textbf{Algorithm 1:}

\emph{1. Waiting phase:}  Wait till instant $T_{s_{1}}$ defined as:
\begin{align}
&T_{s_{1}}=\min ~~~~~~ t \\
s.t.&~~ B_0\leq  \lim\limits_{T\to \infty} T\times  r(\frac{E_{s}(t)}{T}),\quad
0< B_{s}(t)\label{20}.
\end{align}
The first term in \eqref{20} states that the Tx must wait until time $t$ having enough energy for transmitting $B_{0}$ bits (such that we can transmit $B_{0}$ bits in an interval with finite length) and the second term in \eqref{20} is due to data causality condition.

\emph{2. Transmission phase:} For $t\geq T_{s_{1}}$, the transmitted power curve (denoted as $p_{on}(t)$) satisfies the following equality,
{\small\begin{align}
\dfrac{\overbrace{E_{s}(t)-\int_{T_{s}}^{t}p_{on}(t^{'})dt^{'}}^{E_{rem}(t)}}{p_{on}(t)}r(p_{on}(t))=\overbrace{B_{0}-\int_{T_{s}}^{t}r(p_{on}(t^{'}))dt^{'}}^{B_{rem}(t)}\label{11},
\end{align}}if $B_{s}(t)-\int_{T_{s}}^{t}r(p_{on}(t^{'}))dt^{'}>0$. Note that, depending on the arrival data curve, the Tx may be silent (in some intervals of the transmission phase), since there is no data to transmit in these intervals.
\eqref{11} describes $p_{on}(t)$ in each instant $t$ by making a compromise among the remaining energy ($E_{rem}(t)$) and remaining data ($B_{rem}(t)$) in each instant $t$. The idea comes from that in each instant $t$, if there is no received energy in the future, the remaining data will be transmitted optimally with a fixed transmission power (in a minimum completion time).\\
\textbf{Algorithm 2:}

\emph{1. Waiting phase:} Wait till instant $T_{s_{2}}$ defined as:
\begin{align}
& T_{s_{2}}=\min ~~~~~~ t \\
s.t.&~~  t \times r(\frac{E_{s}(t)}{t})\geq B_{0}, \quad B_{s}(t)= B_{0}\label{10}.
\end{align}
The first term in \eqref{10} states that the Tx must wait until time $t$ having enough energy for transmitting $B_{0}$ bits and the second term in \eqref{10} forces the Tx to wait until all $B_{0}$ bits data is received.

\emph{2. Transmission phase:} For $t\geq T_{s_{2}}$, the transmitted power curve (denoted as $p_{on}(t)$) satisfies \eqref{11}.

We start with analysis of online time efficiency ratio for the Algorithm 2 (which is simpler to analyze). Then, we show that the Algorithm 1 is more efficient in Theorem \ref{performace}.

To compare with existing works, we remark that the resulted transmitted curves of Algorithm 1 is different from the algorithms in \cite{vaze2013competitive} and \cite{zheng2016online} in two aspects: 1) In our proposed Algorithm~1, due to the data arrival process, we add another condition to keep Tx silent until we have data to transmit (the second term of \eqref{20}). However, the Generalized Lazy Online Algorithm (GLO) in \cite{vaze2013competitive} and algorithm 2 in \cite{zheng2016online} do not consider such conditions. 2) In the proposed online Algorithm 1, the Tx is silent in some intervals of Transmission phase (due to data arrival process), while in the algorithms of \cite{vaze2013competitive,zheng2016online}, the Tx never becomes silent after transmission starts.
It is worth noting that these two differences in the algorithm as well as the continuity assumption make our proofs to be different from the existing results on online algorithms.
In the following, we use the subscribes 1 and 2 for the variables corresponding to algorithms 1 and 2, respectively.
\begin{lemma}\label{IV.1}
 $p_{on_{2}}(t)$ is nondecreasing.
\end{lemma}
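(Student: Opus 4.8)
The plan is to convert the defining relation \eqref{11} into an implicit equation for $p_{on_{2}}(t)$ and differentiate it. First I would introduce the secant-slope function $g(p) := r(p)/p$, so that \eqref{11} reads $E_{rem}(t)\,g(p_{on_{2}}(t)) = B_{rem}(t)$, i.e. $g(p_{on_{2}}(t)) = B_{rem}(t)/E_{rem}(t)$. Since the transmission phase of Algorithm~2 begins only after all $B_0$ bits have arrived (the condition $B_s(t)=B_0$ in \eqref{10}), the data-causality constraint \eqref{5} is inactive throughout; hence \eqref{11} holds for every $t\geq T_{s_{2}}$ with $B_{rem}(t)>0$, and $E_{rem}(t)>0$ until completion. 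In particular the Tx is never silent and $p_{on_{2}}(t)$ is strictly positive on the transmission phase.

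The structural fact I would establish next is that $g$ is strictly decreasing. Because $r$ is strictly concave with $r(0)=0$, its graph lies strictly below the tangent at any $p>0$, so evaluating at $x=0$ gives $0=r(0)<r(p)-r'(p)\,p$, i.e. $p\,r'(p)-r(p)<0$. Therefore $g'(p)=\big(p\,r'(p)-r(p)\big)/p^{2}<0$. This has two consequences: $g$ is invertible with a continuous decreasing inverse, so $p_{on_{2}}(t)=g^{-1}\!\big(B_{rem}(t)/E_{rem}(t)\big)$ is uniquely and continuously determined, and the sign $g'<0$ will drive the monotonicity argument.

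I would then differentiate $E_{rem}(t)\,g(p_{on_{2}}(t))=B_{rem}(t)$ in $t$, using $\dot E_{rem}(t)=E_s'(t)-p_{on_{2}}(t)$ and $\dot B_{rem}(t)=-r(p_{on_{2}}(t))$. Substituting the identity $r(p)=p\,g(p)$ cancels the term $-p_{on_{2}}(t)\,g(p_{on_{2}}(t))$ on the left against $-r(p_{on_{2}}(t))$ on the right, leaving
\[
\dot p_{on_{2}}(t)=\frac{-E_s'(t)\,g(p_{on_{2}}(t))}{E_{rem}(t)\,g'(p_{on_{2}}(t))}.
\]
A sign count then finishes the computation: $E_s'(t)\geq 0$ since the harvested energy curve is nondecreasing, $g(p_{on_{2}}(t))>0$ and $E_{rem}(t)>0$, while $g'(p_{on_{2}}(t))<0$ by the previous step. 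Hence the numerator is $\leq 0$ and the denominator is $<0$, giving $\dot p_{on_{2}}(t)\geq 0$.

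The main obstacle I anticipate is rigor at the non-smooth points. Since $E_s(t)$ is differentiable only off a finite set, the implicit differentiation above is valid only on the open subintervals between these breakpoints, and at a harvesting kink $p_{on_{2}}$ may itself fail to be differentiable. I would close this gap by invoking the continuity of $p_{on_{2}}(t)=g^{-1}\!\big(B_{rem}(t)/E_{rem}(t)\big)$ on the whole transmission phase (a composition of continuous maps), so that the piecewise monotonicity established on each subinterval, together with continuity across the finitely many breakpoints, upgrades to monotonicity of $p_{on_{2}}$ on all of $t\geq T_{s_{2}}$. A secondary item to verify is that the ratio $B_{rem}(t)/E_{rem}(t)$ remains in the range of $g$ so that $g^{-1}$ is defined; this is guaranteed at $t=T_{s_{2}}$ by the energy-sufficiency condition in \eqref{10} and is preserved thereafter.
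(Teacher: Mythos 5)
Your core computation is correct, and it takes a genuinely different route from the paper. The paper argues by contradiction: assuming $p_{on_{2}}$ is decreasing on some interval $[t_{0},t_{0}+\delta]$, it manipulates \eqref{11} at the two endpoints (using only $E_{s}(t_{0})\leq E_{s}(t_{0}+\delta)$ and the monotone decrease of $r(p)/p$) to obtain \eqref{eqn:algo}, and then a Riemann-sum comparison yields the reverse inequality \eqref{eqn:delta2}; notably, it never differentiates $p_{on_{2}}$ nor assumes it is differentiable. You instead invert \eqref{11} as $p_{on_{2}}(t)=g^{-1}\bigl(B_{rem}(t)/E_{rem}(t)\bigr)$ and differentiate, getting a closed-form expression for $\dot p_{on_{2}}$ whose sign is immediate. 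Your route is more quantitative (it shows $p_{on_{2}}$ strictly increases exactly while energy is being harvested and is constant otherwise), and your derivation of $g'(p)<0$ from strict concavity with $r(0)=0$ makes explicit a fact the paper merely asserts. The price is regularity bookkeeping that the paper's integral argument sidesteps entirely.

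There is one genuine flaw in that bookkeeping. Your gluing step at the exceptional points invokes continuity of $p_{on_{2}}=g^{-1}\bigl(B_{rem}/E_{rem}\bigr)$, but the paper's model explicitly allows $E_{s}(t)$ to have jump discontinuities at the finitely many exceptional points --- this is precisely how the discrete (packetized) energy-arrival case is included in the model. At such an instant $E_{rem}$ jumps, hence $p_{on_{2}}$ jumps, and continuity cannot be invoked there. The conclusion still survives, but for a reason you must state: at an energy arrival, $E_{rem}$ jumps \emph{upward} while $B_{rem}$ remains continuous (its integrand is bounded), so $B_{rem}/E_{rem}$ jumps downward, and since $g^{-1}$ is decreasing, $p_{on_{2}}$ jumps \emph{upward}, which is consistent with being nondecreasing. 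With that one-line jump-sign argument replacing the continuity claim at discontinuity points (continuity gluing remains valid at kink points where $E_{s}$ is continuous but not differentiable), your proof is complete.
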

\begin{proof} We use contradiction: we assume that for $t\in [t_{0},t_{0}+\delta]$, $p_{on_{2}}(t)$ is monotonically decreasing. For $t=t_{0}+\delta$, \eqref{11} implies:
\begin{small}
 \begin{align}
\dfrac{E_{s}(t_{0}+\delta)-\int_{T_{s_{2}}}^{t_{0}+\delta}p_{on_{2}}(t)dt}{p_{on_{2}}(t_{0}+\delta)}&r(p_{on_{2}}(t_{0}+\delta))=\\
B_{0}-\int_{T_{s_{2}}}^{t_{0}}r(p_{on_{2}}(t))dt-&\int_{t_{0}}^{t_{0}+\delta}r(p_{on_{2}}(t))dt.
\end{align}\end{small}

Noting that $\frac{r(p)}{p}$ is monotonically decreasing and $E_{s}(t_{0})\leq E_{s}(t_{0}+\delta)$, we get
\begin{small}
\begin{align*}
\dfrac{E_{s}(t_{0})-\int_{T_{s_{2}}}^{t_{0}}p_{on_{2}}(t)dt-\int_{t_{0}}^{t_{0}+\delta}p_{on_{2}}(t)dt}{p_{on_{2}}(t_{0})}r(p_{on_{2}}(t_{0}))\\
<\dfrac{E_{s}(t_{0})-\int_{T_{s_{2}}}^{t_{0}}p_{on_{2}}(t)dt}{p_{on_{2}}(t_{0})}r(p_{on_{2}}(t_{0}))-\int_{t_{0}}^{t_{0}+\delta}r(p_{on_{2}}(t))dt.
\end{align*}\end{small}Hence,
\begin{small}
\begin{align}
\dfrac{\int_{t_{0}}^{t_{0}+\delta}r(p_{on_{2}}(t))dt}{\int_{t_{0}}^{t_{0}+\delta}p_{on_{2}}(t)dt}<\dfrac{r(p_{on_{2}}(t_{0}))}{p_{on_{2}}(t_{0})}.\label{eqn:algo}
 \end{align}\end{small}Now applying the Riemann sum, we have
 \begin{small}
\begin{align}
\dfrac{\int_{t_{0}}^{t_{0}+\delta}r(p_{on_{2}}(t))dt}{\int_{t_{0}}^{t_{0}+\delta}p_{on_{2}}(t)dt}=\dfrac{\lim_{N \to\infty}\Delta\sum_{k=0}^{N-1}r(p_{on_{2}}(t_{0}+k\Delta))}{\lim_{N \to\infty}\Delta\sum_{k=0}^{N-1}p_{on_{2}}(t_{0}+k\Delta)},\label{eqn:Rie}
\end{align}
\end{small}
where $\Delta=\frac{\delta}{N}$. Also, since $p_{on_{2}}(t)$ is monotonically decreasing in $[t_{0},t_{0}+\delta]$ we have,
\begin{small}
\begin{align*}
 \dfrac{r(p_{on_{2}}(t_{0}))}{p_{on_{2}}(t_{0})}<\dfrac{r(p_{on_{2}}(t_{0}+\Delta))}{p_{on_{2}}(t_{0}+\Delta)}<...<\dfrac{r(p_{on_{2}}(t_{0}+(N-1)\Delta))}{p_{on_{2}}(t_{0}+(N-1)\Delta)}.
\end{align*}
\end{small}
Thus for $N\in [2,\infty]$,
\begin{small}
\begin{align}
\dfrac{r(p_{on_{2}}(t_{0}))}{p_{on_{2}}(t_{0})}< \dfrac{\Delta\sum_{k=0}^{N-1}r(p_{on_{2}}(t_{0}+k\Delta))}{\Delta\sum_{k=0}^{N-1}p_{on_{2}}(t_{0}+k\Delta)}.\label{eqn:delta}
\end{align}\end{small} Combining \eqref{eqn:Rie} and \eqref{eqn:delta} results in:
\begin{small}
\begin{align}
\dfrac{r(p_{on_{2}}(t_{0}))}{p_{on_{2}}(t_{0})}<\dfrac{\int_{t_{0}}^{t_{0}+\delta}r(p_{on_{2}}(t))dt}{\int_{t_{0}}^{t_{0}+\delta}p_{on_{2}}(t)dt}.\label{eqn:delta2}
 \end{align}\end{small}\eqref{eqn:algo} and \eqref{eqn:delta2} result in a contradiction and complete proof.
 \end{proof}
\begin{lemma}\label{IV.2}$T_{s_{2}} \leq T_{off}$.
\end{lemma}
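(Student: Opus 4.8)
The plan is to exhibit $T_{\textrm{off}}$ itself as a feasible point for the minimization that defines $T_{s_{2}}$, so that minimality immediately yields $T_{s_{2}}\le T_{\textrm{off}}$. Recall that $T_{s_{2}}$ is the \emph{smallest} $t$ simultaneously satisfying the energy condition $t\,r\!\left(\frac{E_{s}(t)}{t}\right)\ge B_{0}$ and the data condition $B_{s}(t)=B_{0}$. Hence it suffices to verify that both of these hold at the single instant $t=T_{\textrm{off}}$; the inequality then follows with no further work.

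The data condition is the easy half. By the problem setup we have $B_{s}(T_{B_{0}})=B_{0}$ with $T_{B_{0}}\le T_{\textrm{off}}$, and no new data arrives after $T_{B_{0}}$, so $B_{s}(t)=B_{0}$ for every $t\ge T_{B_{0}}$. In particular $B_{s}(T_{\textrm{off}})=B_{0}$, which is exactly the second constraint in \eqref{10} evaluated at $t=T_{\textrm{off}}$.

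The energy condition is where the real content lies, and I expect it to be the main obstacle since it requires linking the offline power profile to the ``uniform-power'' quantity $T_{\textrm{off}}\,r\!\left(\frac{E_{s}(T_{\textrm{off}})}{T_{\textrm{off}}}\right)$. Let $p_{\textrm{off}}(t)$ be an optimal offline policy, so that $\int_{0}^{T_{\textrm{off}}}r(p_{\textrm{off}}(t))\,dt=B_{0}$ by \eqref{3} and $\int_{0}^{T_{\textrm{off}}}p_{\textrm{off}}(t)\,dt\le E_{s}(T_{\textrm{off}})$ by the energy-causality constraint \eqref{4}. Since $r$ is concave, Jensen's inequality gives
\begin{align*}
\frac{B_{0}}{T_{\textrm{off}}}=\frac{1}{T_{\textrm{off}}}\int_{0}^{T_{\textrm{off}}}r(p_{\textrm{off}}(t))\,dt\le r\!\left(\frac{1}{T_{\textrm{off}}}\int_{0}^{T_{\textrm{off}}}p_{\textrm{off}}(t)\,dt\right)\le r\!\left(\frac{E_{s}(T_{\textrm{off}})}{T_{\textrm{off}}}\right),
\end{align*}
where the last step uses the energy constraint together with the monotonicity of $r$ (property iv). Multiplying by $T_{\textrm{off}}$ yields $T_{\textrm{off}}\,r\!\left(\frac{E_{s}(T_{\textrm{off}})}{T_{\textrm{off}}}\right)\ge B_{0}$, which is precisely the first constraint in \eqref{10} at $t=T_{\textrm{off}}$.

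With both constraints verified at $t=T_{\textrm{off}}$, this instant is feasible for the definition of $T_{s_{2}}$, and since $T_{s_{2}}$ is the minimal feasible $t$ we conclude $T_{s_{2}}\le T_{\textrm{off}}$. The only subtlety to guard against is ensuring that the Jensen step is applied to a genuinely optimal $p_{\textrm{off}}$ and that the energy-causality bound is evaluated at the right endpoint $t=T_{\textrm{off}}$ rather than at some smaller $t$; both are guaranteed directly by the offline formulation \eqref{3}--\eqref{5}.
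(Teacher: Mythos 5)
Your proof is correct. Both constraints of \eqref{10} are verified at $t=T_{\textrm{off}}$: the data condition follows immediately from $T_{B_0}\le T_{\textrm{off}}$ and the absence of arrivals after $T_{B_0}$, and the energy condition follows from \eqref{3}, \eqref{4}, Jensen's inequality applied to the concave rate function $r$, and its monotonicity. Minimality of $T_{s_2}$ then gives the claim, and no step has a gap.

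The comparison with the paper is slightly unusual because the paper does not actually present a proof: it disposes of the lemma in one line by citing Lemma~2 of \cite{milad} (the authors' companion paper on the continuous model). That cited lemma is, in essence, the statement that over a horizon of length $t$ with total energy $E_s(t)$, no policy can transmit more than $t\,r\!\left(\frac{E_s(t)}{t}\right)$ bits, i.e., constant-power transmission is data-maximizing for a fixed energy budget --- which is exactly the content of your Jensen step. So your argument is the natural self-contained filling-in of the paper's citation: same underlying mathematical fact (concavity makes power-averaging beneficial), but you derive it directly from the offline feasibility conditions \eqref{3}--\eqref{4} rather than invoking an external result. What your version buys is that the lemma becomes verifiable within this paper alone; what the paper's version buys is brevity and reuse of machinery already established elsewhere. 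One small point worth stating explicitly in your write-up: you implicitly assume an optimal offline policy $p_{\textrm{off}}$ is attained. If one wants to avoid that, the same chain of inequalities applied to any feasible policy with completion time $T_{\textrm{off}}+\epsilon$ gives $T_{s_2}\le T_{\textrm{off}}+\epsilon$ for every $\epsilon>0$, and the conclusion follows by letting $\epsilon\to 0$.
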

\begin{proof}
The result can be easily proved using \cite[Lemma 2]{milad} and thus it is omitted for brevity.
\end{proof}
Now, we show that Algorithm 2, without any information about distribution of two process $B_{s}(t)$ and $E_{s}(t)$, can transmit $B_0$ bits in less than twice of completion time in the optimal offline algorithm.
\begin{theorem}\label{IV.4}$T_{on_{2}} \leq 2~T_{off}$.
\end{theorem}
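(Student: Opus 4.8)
The plan is to split the online completion time as $T_{on_2}=T_{s_2}+(T_{on_2}-T_{s_2})$ and bound each piece separately. Lemma \ref{IV.2} already gives $T_{s_2}\le T_{off}$, so the whole task reduces to showing that the transmission phase is no longer than the waiting phase, i.e. $T_{on_2}-T_{s_2}\le T_{s_2}$; combined with $T_{s_2}\le T_{off}$ this yields $T_{on_2}\le 2T_{s_2}\le 2T_{off}$.

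First I would read off the initial transmit power. Evaluating \eqref{11} at $t=T_{s_2}$, where $E_{rem}(T_{s_2})=E_{s}(T_{s_2})$ and $B_{rem}(T_{s_2})=B_0$, and writing $p_0:=p_{on_2}(T_{s_2})$, gives the defining relation $\frac{r(p_0)}{p_0}=\frac{B_0}{E_{s}(T_{s_2})}$. The key comparison is between $p_0$ and the threshold power $p^{\ast}:=E_s(T_{s_2})/T_{s_2}$. The first constraint of the waiting phase \eqref{10}, evaluated at the stopping time $T_{s_2}$, reads $T_{s_2}\,r(p^{\ast})\ge B_0$, i.e. $\frac{r(p^{\ast})}{p^{\ast}}\ge \frac{B_0}{E_s(T_{s_2})}=\frac{r(p_0)}{p_0}$. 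Because $r(p)/p$ is strictly decreasing (a consequence of strict concavity with $r(0)=0$, as already used in the proof of Lemma \ref{IV.1}), this forces $p_0\ge p^{\ast}$, and hence $\frac{E_s(T_{s_2})}{p_0}\le \frac{E_s(T_{s_2})}{p^{\ast}}=T_{s_2}$.

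It remains to turn this into a bound on the actual, non-constant transmission duration. Here I would invoke Lemma \ref{IV.1}: since $p_{on_2}$ is nondecreasing, $p_{on_2}(t)\ge p_0$ for every $t\in[T_{s_2},T_{on_2}]$, and therefore $r(p_{on_2}(t))\ge r(p_0)$ by monotonicity of $r$. For Algorithm 2 the waiting condition $B_s(T_{s_2})=B_0$ guarantees that all data is present before transmission begins, so there are no silent sub-intervals and this rate lower bound holds throughout the phase. The completion requirement \eqref{3} over the transmission phase then gives $B_0=\int_{T_{s_2}}^{T_{on_2}} r(p_{on_2}(t))\,dt\ge r(p_0)\,(T_{on_2}-T_{s_2})$, so that $T_{on_2}-T_{s_2}\le \frac{B_0}{r(p_0)}=\frac{E_s(T_{s_2})}{p_0}\le T_{s_2}$, where the last equality uses the defining relation for $p_0$. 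Adding $T_{s_2}$ and applying Lemma \ref{IV.2} completes the argument.

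The step I expect to be the crux is the comparison $p_0\ge p^{\ast}$: everything hinges on correctly matching the implicit greedy power defined by \eqref{11} against the waiting-phase threshold in \eqref{10} through the monotonicity of $r(p)/p$. The secondary technical point is verifying that the nondecreasing property from Lemma \ref{IV.1} genuinely propagates the rate bound $r(p_{on_2}(t))\ge r(p_0)$ across the \emph{entire} transmission phase, which is precisely where the absence of silent intervals in Algorithm 2 (ensured by $B_s(T_{s_2})=B_0$) is needed.
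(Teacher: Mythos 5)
Your proof is correct and takes essentially the same approach as the paper: both use Lemma \ref{IV.1} to reduce the analysis to the constant-power baseline $p_0=p_{on_2}(T_{s_2})$, bound that baseline's duration by $T_{s_2}$ via the first waiting condition in \eqref{10}, and conclude with Lemma \ref{IV.2}. The only (cosmetic) difference is that you carry out the key comparison in the power domain through the monotonicity of $r(p)/p$ (obtaining $p_0\geq E_s(T_{s_2})/T_{s_2}$), whereas the paper states the same inequality in the time domain via the implicit equation $(t_c-T_{s_2})\,r\bigl(E_s(T_{s_2})/(t_c-T_{s_2})\bigr)=B_0$; your version actually spells out the monotonicity step the paper leaves implicit.
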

\begin{proof}
Since $p_{on_{2}}(t)$ is nondecreasing (Lemma \ref{IV.1}), it is enough to show that when we let $p_{on_{2}}(t)=p_{on_{2}}(T_{s_{2}})$ for all instants $t\geq T_{s_{2}}$, the claim is true. Obviously, the algorithm with $p_{on_{2}}(t)\geq p_{on}(T_{s_{2}})$ transmits $B_0$ bits in less time (due to the fact that $r(p)$ is an increasing function). Thus, let $p_{on_{2}}(t)=p_{on_{2}}(T_{s})$ for all instants $t$. The instant ($t_{c}$) in which all of $B_{0}$ bits are transmitted, is obtained from the following:
{\small$(t_{c}-T_{s_{2}})~r(\dfrac{E_{s}(T_{s_{2}})}{t_{c}-T_{s_{2}}})=B_{0},$} which accompanying the first term of \eqref{10} results in $t_{c}-T_{s_{2}}\leq T_{s_{2}}$. Now, applying Lemma \ref{IV.2}, we have $t_{c}<2T_{s_{2}}<2T_{off}$. This completes the proof.
\end{proof}
Now, we obtain the optimal online time efficiency ratio.
\begin{theorem}\label{optimal2cop}
The optimal online time efficiency ratio is equal to $2$, i.e., $\min\{\rho_{A}\}= 2$ when $A$ is the optimal online algorithm.
\end{theorem}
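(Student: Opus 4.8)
The plan is to prove the equality in two directions. The upper bound $\min\{\rho_A\}\le 2$ requires nothing new: Theorem \ref{IV.4} already exhibits a concrete online algorithm (Algorithm 2) with $T_{on_2}\le 2\,T_{off}$ on \emph{every} instance, so the optimal online algorithm, being at least as good, also has ratio at most $2$. Hence the whole task reduces to the converse, which is the hard direction: for every online algorithm $A$ and every $\epsilon>0$, I must produce curves $\gamma=(B_s,E_s)$ forcing $C_A^{\varphi}/C_O^{\varphi}\ge 2-\epsilon$, which gives $\min\{\rho_A\}\ge 2$ and hence the claimed equality.

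To get the converse I would use an adversary that reveals $E_s$ and $B_s$ causally and reacts to $A$. The naive adversary — hand $A$ a little energy, watch it commit, then release a large burst so the offline optimum profits from hindsight — fails here: since $\lim_{p\to\infty}r(p)=\infty$, once a large amount of energy is revealed the online algorithm can discharge its remaining bits in arbitrarily small time and ``catch up,'' pinning the ratio near $1$. The construction must therefore live in the critically-resourced regime, where the total budget tends to the minimum needed to send $B_0$ at all, i.e.\ the threshold appearing in the waiting condition \eqref{20} (equivalently $T_{s_1},T_{off}\to\infty$ under normalization). There the strict concavity of $r$ — the property $r(p)/p$ strictly decreasing that already drives \eqref{11} and Lemma \ref{IV.1} — makes any premature transmission at positive power irreversibly wasteful: energy spent early cannot be recovered, and with the budget barely sufficient this forces a long tail to still deliver $B_0$, which defeats the catch-up escape.

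The core is then an indistinguishability (ski-rental-type) dichotomy. Fix $A$ and let two candidate curves coincide on $[0,\theta)$, so that $A$ produces the identical $p_{on}(t)$ there, and in particular consumes the same energy and delivers the same data by time $\theta$, under both. One continuation stops all arrivals, which punishes an algorithm that waited or rationed: the offline optimum, knowing the stop, paces a single tuned constant power from the start (as in the characterization behind Lemma \ref{IV.2}) and finishes near $T_{off}$, whereas $A$ must still start, or still accelerate with insufficient remaining energy. The other continuation keeps resources trickling in just fast enough that the offline optimum profits from its earlier, better-paced transmission while $A$'s premature or mistimed power costs it. Choosing $\theta$ at the threshold where $A$ switches from waiting to transmitting, and letting the budget tend to critical, I would argue that the maximum of the two resulting ratios cannot be pushed below $2-\epsilon$.

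The hard part is exactly the tightness of this dichotomy: showing that \emph{no} online hedge does well against both continuations at once, uniformly in $A$. This needs a quantitative comparison — via Jensen/Riemann estimates of the type used in Lemma \ref{IV.1}, together with the offline pacing underlying Lemma \ref{IV.2} — of how much the strictly-concave penalty on early power, combined with the data-causality constraint \eqref{5}, separates $A$'s completion time from $T_{off}$, followed by a careful passage to the critical/continuous limit that drives the separation to the exact factor $2$. It is this adversary-uniform, limiting estimate, rather than any single bad instance, where the continuity assumption and the arrival-data process make the argument depart from the discrete analyses of \cite{zheng2016online,vaze2013competitive}.
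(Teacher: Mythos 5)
Your upper bound direction is correct and matches the paper: Theorem \ref{IV.4} exhibits a $2$-competitive algorithm, hence $\min\{\rho_{A}\}\leq 2$. The genuine gap is in the converse, which you yourself flag as ``the hard direction.'' Everything from ``The core is then an indistinguishability (ski-rental-type) dichotomy'' onward is a plan, not a proof: the decisive claim --- that for \emph{every} online algorithm $A$, the maximum of the two ratios produced by your two continuations cannot be pushed below $2-\epsilon$ --- is precisely the statement to be proven, and you only say you ``would argue'' it via Jensen/Riemann estimates and a passage to a critical limit. Neither continuation is pinned down quantitatively (how much energy is revealed before the branch point $\theta$, what ``trickling in just fast enough'' means, or why the resulting constant is exactly $2$ rather than some other ski-rental-type constant), so as written the lower bound is unproven and the theorem is not established.

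It is also worth seeing how the paper avoids this difficulty entirely: it does not construct an adversary, but inherits the lower bound from \cite{zheng2016online} by specialization. The model class here explicitly admits piecewise-continuous (hence step-function, i.e., discrete) energy arrivals and the degenerate arrival-data curve in which all $B_{0}$ bits are available from the start; on such instances the data-causality constraint \eqref{5} is vacuous and causal observation of $B_{s}(t)$ conveys no information, so any online algorithm for the general problem induces an online algorithm for the stored-data discrete problem of \cite{zheng2016online}. Since the two-person zero-sum game there has value $2$ (an $\epsilon$-saddle point for every $\epsilon>0$), every online algorithm is forced to a ratio arbitrarily close to $2$ on some such instance, giving $\min\{\rho_{A}\}\geq 2$. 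Note the direction of the logic: enlarging the instance class can only increase the adversary's worst case, and enlarging the algorithm's observations (a trivial $B_{s}$) cannot help it, so the continuous-with-arrivals setting makes the lower bound \emph{easier} to obtain, not harder --- the opposite of your concluding remark that continuity and the arrival process force a departure from the discrete analyses. To repair your proposal you would either have to carry out the dichotomy argument in full, or replace it with this reduction, which is both shorter and complete.
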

\begin{proof}
Based on the definition of $\rho_{A}$, the optimal online algorithm $A$ is $\rho_{A}$-competitive if its optimal online time efficiency ratio is lower than $\rho_{A}$ for \emph{any} channel capacity function $r(p)$ with the specifications stated in Section~II and any harvested energy/arrival data processes. In Theorem \ref{IV.4}, we showed that the optimal online algorithm is $2$-competitive by proposing an online algorithm with $\rho_{A}\leq 2$ for any channel capacity function $r(p)$. This completes the upper bound proof.

Now we exploit the technique of \cite{zheng2016online} to find a lower bound on the optimal online time efficiency ratio.
Assuming all $B_{0}$ amount of data has been stored at the beginning of transmission, a two-person zero-sum game has been considered in \cite[Section VI]{zheng2016online}, between the Tx (the strategy designer) and nature where the kernel function of the game is the online time efficiency ratio.
In \cite[Section VI]{zheng2016online}, the authors shows that this game has an $\epsilon$ saddle point for any positive $\epsilon$ and this two-person zero-sum game has a value of $2$ in pure strategy.
To show the lower bound it is enough to provide an example where no online algorithm $B$ has $\rho_B<2$ (the worst-case approach).
Because the system model in \cite{zheng2016online} is an special case of our model (with no arrival data and considering discrete harvested energy arrival), the above example also works for our model. This completes the lower bound proof.
\end{proof}
In theorems \ref{IV.4} and \ref{optimal2cop}, we have shown that Algorithm 2 is optimal online algorithm in sense of competitive-ratio. In fact, there are a lot of other algorithms that are 2-competitive. In the competitive-ratio analysis, we only consider the worst case, however in practice we often wish to minimize the completion time in a specific case (not necessarily the worst case).
In the next theorem, we prove that Algorithm 2 never performs better than Algorithm 1. Also, it can be easily shown that Algorithm 2 performs better than the extension of algorithm 1 in \cite{zheng2016online} to the continuous model by adding another condition to keep the Tx silent until the Tx obtains $B_{0}$ data. The reason is that in Algorithm 2 in the transmission phase the transmission power is increasing if there is a new energy arrival, while in the extension of algorithm 1 in \cite{zheng2016online} the transmission power is constant.
\begin{theorem}\label{performace}
$B_{on_{2}}(t)\leq B_{on_{1}}(t)$.
\end{theorem}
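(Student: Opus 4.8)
The plan is to first reduce the claim to a comparison at a single crossing instant and then to an energy-accounting inequality. I would begin by showing $T_{s_1}\le T_{s_2}$: writing $x=E_s(t)/T$ and using $r(0)=0$ gives $\lim_{T\to\infty}T\,r(E_s(t)/T)=E_s(t)\lim_{x\to0^+}\tfrac{r(x)}{x}$, while the first term of \eqref{10} reads $E_s(t)\tfrac{r(E_s(t)/t)}{E_s(t)/t}\ge B_0$; since $r(p)/p$ is decreasing, $\tfrac{r(E_s(t)/t)}{E_s(t)/t}\le\lim_{x\to0^+}\tfrac{r(x)}{x}$, so every $t$ feasible for \eqref{10} is feasible for \eqref{20} (the data terms satisfy $B_s(t)=B_0\Rightarrow B_s(t)>0$), whence $T_{s_1}\le T_{s_2}$. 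For $t\le T_{s_2}$ the claim is trivial because $B_{on_2}(t)=0\le B_{on_1}(t)$. For $t>T_{s_2}$ I would argue by contradiction: assuming $B_{on_2}(t)>B_{on_1}(t)$ somewhere, let $t^*=\inf\{t>T_{s_2}:B_{on_2}(t)>B_{on_1}(t)\}$, so $B_{on_1}\ge B_{on_2}$ on $[T_{s_1},t^*]$ and $B_{on_1}(t^*)=B_{on_2}(t^*)=:B^*$. If $B^*=B_0$ both transmissions are already complete and nothing can be overtaken, so I may assume $B^*<B_0$; in particular backlog remains for both at $t^*$, so \eqref{11} holds with equality for Algorithm~1 there. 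Since $B_{rem,1}(t^*)=B_{rem,2}(t^*)$, \eqref{11} together with the monotonicity of $r(p)/p$ turns the whole problem into showing that Algorithm~1 has spent no more energy than Algorithm~2 by time $t^*$.

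The heart of the argument is this energy comparison, which I would carry out by reparametrizing both trajectories by the transmitted data $b\in[0,B^*]$ rather than by time. Let $t_i(b)=\inf\{t:B_{on_i}(t)\ge b\}$ and let $\mathcal E_i(b)=\int_{T_{s_i}}^{t_i(b)}p_{on_i}$ be the energy Algorithm~$i$ has used on reaching level $b$. On any interval where Algorithm~$i$ obeys \eqref{11}, dividing $dE_{rem}=-p_{on_i}\,dt$ by $dB_{on_i}=r(p_{on_i})\,dt$ and substituting \eqref{11} in the form $r(p_{on_i})/p_{on_i}=B_{rem,i}/E_{rem,i}$ gives the first-order linear equation $\tfrac{d\mathcal E_i}{db}=\tfrac{E_s(t_i(b))-\mathcal E_i}{B_0-b}$. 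With $\mathcal E_i(0)=0$, the integrating factor $1/(B_0-b)$ yields
\[
\mathcal E_i(b)=(B_0-b)\int_0^b\frac{E_s(t_i(\beta))}{(B_0-\beta)^2}\,d\beta.
\]
Because $B_{on_1}\ge B_{on_2}$ up to $t^*$ forces $t_1(b)\le t_2(b)$ and $E_s$ is nondecreasing, we get $E_s(t_1(\beta))\le E_s(t_2(\beta))$ and hence $\mathcal E_1(B^*)\le\mathcal E_2(B^*)$. Thus $E_{rem,1}(t^*)\ge E_{rem,2}(t^*)$, so $p_{on_1}(t^*)\ge p_{on_2}(t^*)$, i.e.\ $\tfrac{d}{dt}(B_{on_1}-B_{on_2})(t^*)\ge0$. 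If this is strict, $B_{on_1}$ pulls ahead just past $t^*$, contradicting the definition of $t^*$; if instead $\mathcal E_1(t^*)=\mathcal E_2(t^*)$, the two algorithms share the identical state $(B_{rem},E_{rem})$ at $t^*$ and, since no data arrives after $T_{s_2}$, evolve identically afterwards, so $B_{on_2}$ still never exceeds $B_{on_1}$.

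The main obstacle is that the clean equation above holds for Algorithm~1 only while it is actively governed by \eqref{11}; owing to the data-arrival process, Algorithm~1 may instead be silent or data-limited on some subintervals, where its power is pinned to the arrival rate, $r(p_{on_1})=B_s'$, and the formula changes. I would resolve this by a differential-inequality (comparison) argument on $D(b)=\mathcal E_2(b)-\mathcal E_1(b)$ in place of the explicit formula: $D(0)=0$, and I claim $D$ cannot become negative. At any level where $D=0$: in the \eqref{11}-regime the linear equation gives $\tfrac{d}{db}\tfrac{D}{B_0-b}=\tfrac{E_s(t_2)-E_s(t_1)}{(B_0-b)^2}\ge0$; in the data-limited regime the true power $p_{on_1}$ lies strictly below the power $\tilde p_1$ that \eqref{11} would prescribe, and since $E_s(t_1)\le E_s(t_2)$ forces $\tilde p_1\le p_{on_2}$, we get $p_{on_1}<p_{on_2}$, hence $\tfrac{d\mathcal E_1}{db}=p_{on_1}/r(p_{on_1})<p_{on_2}/r(p_{on_2})=\tfrac{d\mathcal E_2}{db}$, i.e.\ $D'\ge0$. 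So data-limiting only helps, $D\ge0$ is preserved on $[0,B^*]$, and $\mathcal E_1(t^*)\le\mathcal E_2(t^*)$ stands, closing the contradiction and proving $B_{on_2}(t)\le B_{on_1}(t)$ for all $t$. The delicate points to verify carefully are the regime-switching bookkeeping for Algorithm~1 and the observation that a genuine pre-completion crossing cannot occur while Algorithm~1 is data-limited, since data-limiting at $t^*\ge T_{s_2}$ would force $B^*=B_s(t^*)=B_0$.
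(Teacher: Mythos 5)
Your proof is correct in substance, but it reaches the result by a genuinely different route from the paper's. Both arguments share the same skeleton: establish $T_{s_1}\le T_{s_2}$, posit a first instant where $B_{on_2}$ catches the competitor curve, show that the curve being overtaken has spent no more energy at that instant, and then combine \eqref{11} with the strict monotonicity of $r(p)/p$ to contradict the slope condition that overtaking requires. The difference lies in how the energy comparison is obtained. The paper replaces $B_{on_1}$ by an auxiliary curve $B_{new}$ (all transmission intervals of Algorithm 1 shifted right so that a single silent interval precedes a single transmission interval), notes $B_{new}\le B_{on_1}$ and that $B_{new}$ is convex by Lemma \ref{IV.1}, and imports the inequality $E_{new}(a)<E_{on_2}(a)$ from \cite[Lemma 23]{milad}; this is short, but it relies on the external lemma and on applying \eqref{11} to the time-shifted curve with $E_s$ evaluated at the shifted instant (see \eqref{contradict}), a step that requires care when energy arrives during the eliminated silent gaps. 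You avoid the auxiliary curve entirely: reparametrizing by transmitted data, \eqref{11} becomes the linear ODE $\frac{d\mathcal{E}_i}{db}=\bigl(E_s(t_i(b))-\mathcal{E}_i\bigr)/(B_0-b)$, whose integrating-factor solution together with $t_1(b)\le t_2(b)$ and the monotonicity of $E_s$ yields $\mathcal{E}_1(B^*)\le\mathcal{E}_2(B^*)$; here \eqref{11} is always evaluated at real times, and you explicitly settle both the equal-energy boundary case (identical states evolve identically under \eqref{11}) and Algorithm 1's data-limited behavior, neither of which the paper's proof addresses. Two minor tightenings would make your argument airtight: your differential inequality for $D=\mathcal{E}_2-\mathcal{E}_1$ is asserted only at levels where $D=0$, which formally needs a first-crossing/Gronwall refinement; the cleaner statement is that $\frac{d}{db}\bigl[\mathcal{E}_1/(B_0-b)\bigr]\le E_s(t_1(b))/(B_0-b)^2$ holds pointwise in every regime (the actual power never exceeds the \eqref{11}-prescribed power, and $p/r(p)$ is increasing in $p$), and integrating this gives $\mathcal{E}_1\le\mathcal{E}_2$ directly, with no bookkeeping of sign changes of $D$. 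Also, purely silent intervals of Algorithm 1 need no separate regime at all: they occupy a single data level, so the ODE still holds for almost every $b$, with $t_1(b)$ simply jumping across the gap.
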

\begin{figure}\label{fig1}
  \centering
  \includegraphics[width=.45\textwidth,left]{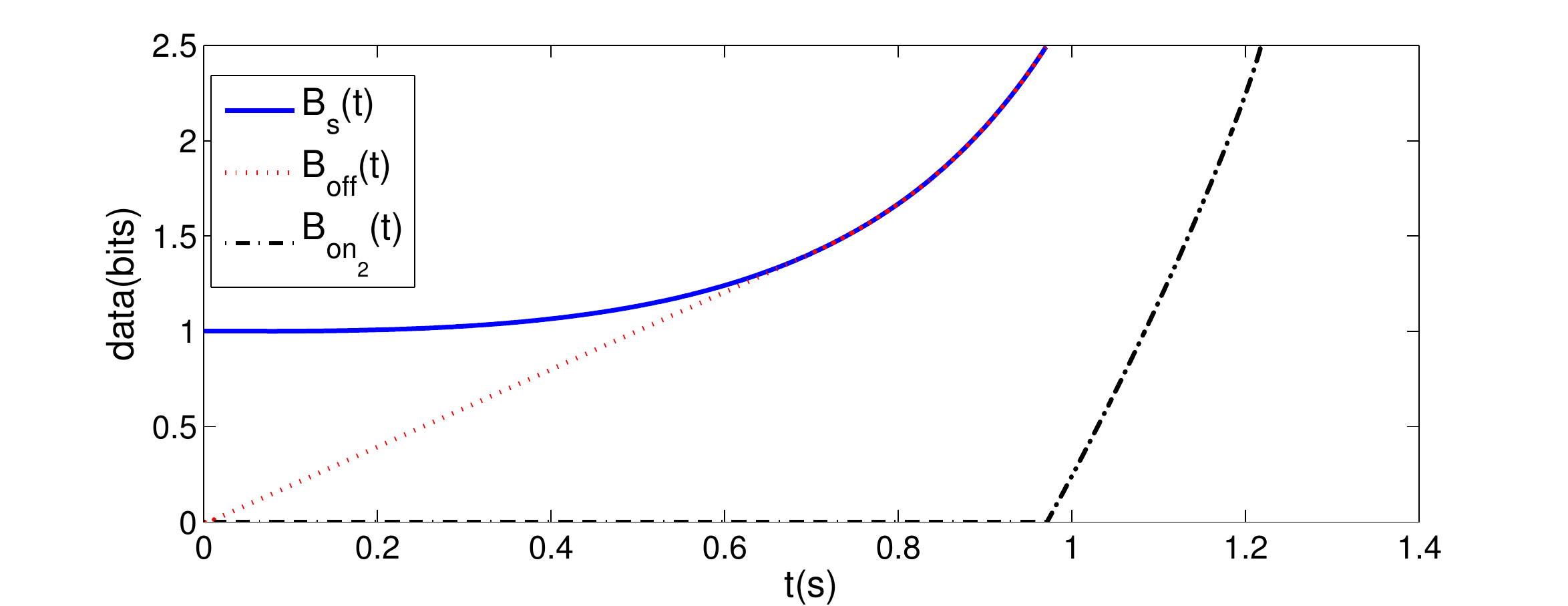}
  \caption{Online and Optimal offline algorithms without discretizing }
  \end{figure}
\begin{proof}
Assume that $0=t_{0}<t_{1}<t_{2}, ..., <t_{2n+1}<t_{2n+2}=T_{on_{1}}$ are all instants, in which the Tx switches from silence to data transmission or vice versa in Algorithm 1. Assume that in $[0,t_1)$, the Tx is silent. If $[0,t_1)$ is not the first interval in which Tx is silent, or there is no interval where the Tx is silent, the proof is similar. We denote $B_{on_{1}}(t)$ as $B_{i}(t)$ in the $i$-th data transmission interval $(t_{2i-1},t_{2i})$. In the other intervals, the Tx is silent and transmitted data curve is constant. We denote $B_{new}(t)$ as a new transmitted data curve such that there is only one silent interval and then one data transmission interval. Its transmission interval is formed by merging all transmission intervals of $B_{on_{1}}(t)$ (shifted to the right). In fact,
{\small
\begin{align*}
B_{new}(t)=\left\{\begin{matrix}
0 & 0\leq t< a\\
B_{on_{1}}(t-(x_{i})+t_{2i-1}) &x_{i} <t<x_{i}+t_{2i}-t_{2i-1}
\end{matrix}\right.,
\end{align*}}where $a=\sum\limits_{m=0}^{n}(t_{2m+1}-t_{2m})$, $x_{i}=\sum\limits_{m=0}^{n}(t_{2m+1}-t_{2m})+\sum\limits_{j=1}^{i}(t_{2j-2}-t_{2j-3})$, $t_{0}=t_{-1}=0$, $i\in \mathbb{N}$ and $i\in [1,n+1]$.
Intuitively, we shift curves $B_{i}(t)$ to the right to eliminate all intervals, in which $B_{on_{1}}(t)$ is constant. Hence, $B_{new}(t)=0$ in $(0,a)$ and based on Lemma \ref{IV.1} $B_{new}(t)$ is convex in $(a, T_{on_{1}})$. Since $B_{on_{1}}(t)$ is non-decreasing, we have $B_{new}(t)\leq B_{on_{1}}(t), \forall t\in [0,T_{on_{1}}]$. Now we prove that $B_{on_{2}}(t)\leq B_{new}(t)$, $\forall t\in[0,T_{on_1}]$. \eqref{20} and \eqref{10} result in $T_{s_{1}}\leq T_{s_{2}}$. If $T_{s_{1}}=T_{s_{2}}$, then $B_{on_{2}}(t)=B_{new}(t), \forall t\in [0, T_{on_{1}}]$. Now, we consider $T_{s_{1}}< T_{s_{2}}$ and use contradiction. Thus, there exists $t\in (T_{s_{2}},T_{on_{1}})$, where $B_{on_{2}}(t)> B_{new}(t)$. Assume that $a$ is the first point, for which there exists $b>a$ such that $\forall t\in(a,b)$, $B_{on_{2}}(t)> B_{new}(t)$ and $B_{on_{2}}(a)=B_{new}(a)$. Hence, we obtain $p_{on_{2}}(a)\geq p_{new}(a)$, because $p(t)$ is piecewise continuous in both algorithms. Since $B_{new}(t)$ is convex, increasing, $B_{on_{2}}(0)= B_{new}(0)$, $B_{on_{2}}(a)= B_{new}(a)$, and $B_{on_{2}}(t)<B_{new}(t)~\forall t\in(0,a)$, we have $E_{new}(a)< E_{on_{2}}(a)$ based on \cite[Lemma 23]{milad}. Now, noting $B_{on_{2}}(a)= B_{new}(a)$ and \eqref{11}, we have,

{\small\begin{align}\label{contradict}
    \frac{r(p_{new}(a))}{p_{new}(a)}(E_{s}(a)-E_{new}(a))=\frac{r(p_{on_{2}}(a))}{p_{on_{2}}(a)}(E_{s}(a)-E_{on_{2}}(a)).
    \end{align}} Combining $E_{new}(a)< E_{on_{2}}(a)$ and \eqref{contradict}, we obtain $\frac{r(p_{new}(a))}{p_{new}(a)}<\frac{r(p_{on_{2}}(a))}{p_{on_{2}}(a)}$. Thus, $p_{new}(a)>p_{on_{2}}(a)$ holds since $\frac{r(p)}{p}$ is monotonically decreasing. This is a contradiction.
\end{proof}

\section{Numerical Results and Concluding Remarks}\label{sec:num}
  \begin{figure}\label{fig2}
    \centering
    \includegraphics[width=.45\textwidth,right]{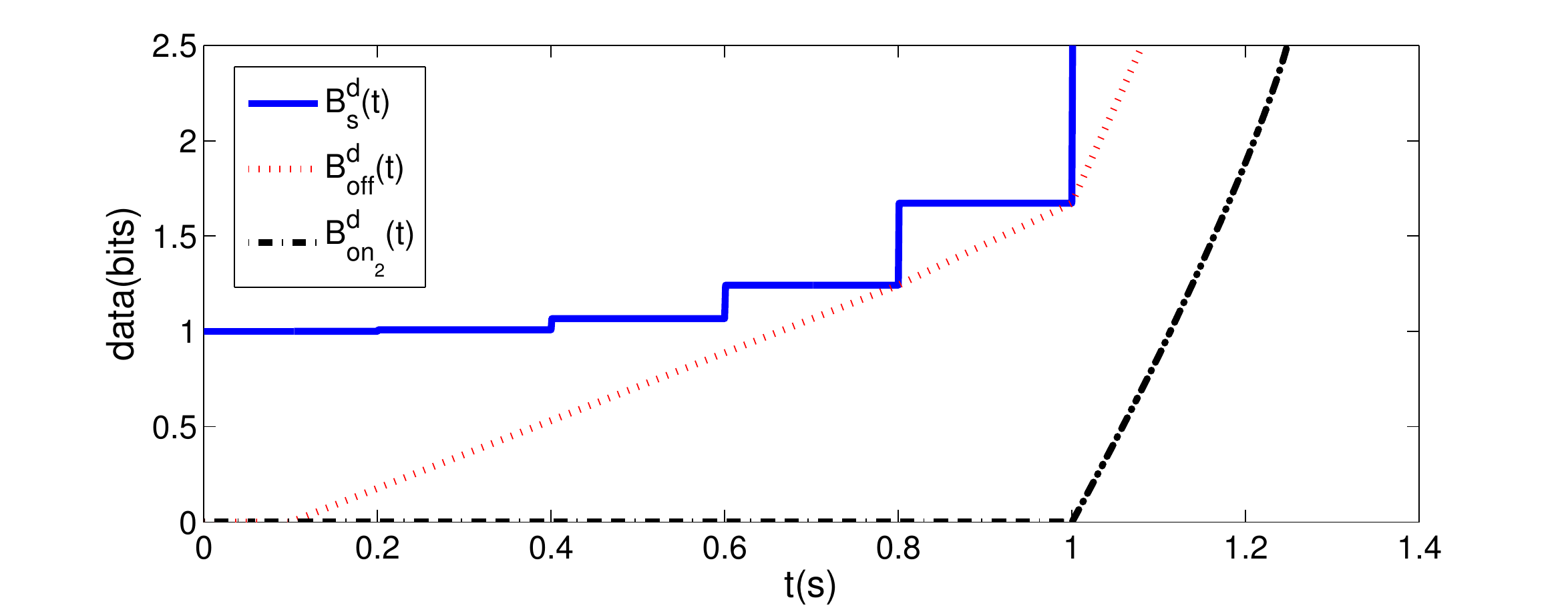}
    \caption{Online and Optimal offline algorithms with discretizing }
    \end{figure}
In this section, we provide a numerical example to explain our results for Algorithm 2. Consider an additive white Gaussian noise channel with a limited bandwidth $W=1$ Hz. Assuming $\frac{\rm{channel~gain}}{\rm{noise~ power}\times W}=1$, we have $r(p)=\log(1+p)$, where the $\log$ is in base $2$. Assume that $E_{s}(t)=100 t^{2}$ J, $B_{s}(t)=e^{t^{3}}$ bits, and $B_{0}=2.5$ bits. Fig. 1 illustrates $B_{s}(t)$, $B_{off}(t)$ and $B_{on_{2}}(t)$, and it can be seen that $p_{on_{2}}(t)$ is nondecreasing (Lemma \ref{IV.1}) and $T_{s}=T_{off}$ (Lemma \ref{IV.2}). Also, observing $T_{off}=.97s$ and $T_{on_{2}}=1.21s$ results in $\frac{T_{on_{2}}}{T_{off}}=1.24\leq 2$ (Theorem \ref{IV.4}). Fig.~2 illustrates the result for discretized model of $E_{s}(t)$ and $B_{s}(t)$, where $B_{s}^{d}(t)$ is the discretized version of $B_{s}(t)$. It can be easily seen that $T_{off}^{d}>T_{off}$ and $T_{on_{2}}^{d}>T_{on_{2}}$. Hence, As mentioned in Introduction the discretizing have reduced the efficiency.

To conclude, in this paper, we assumed an EH system with continuous arrival data and continuous harvested energy curves, the obtained results are held even for the discrete model. However, the most of research works in this area consider a discrete model because of mathematical tractability of the ensuing system optimization. We proposed two online algorithms which achieve the upper bound of $2$ on the online time efficiency ratio. Also, we showed that this ratio is $2$ for the optimal online algorithm. Moreover, we compared the performance of proposed online algorithms.

\bibliographystyle{IEEEtranTCOM}
\bibliography{IEEEabrv}
%




\end{document}